\newtheorem{proposition}{Proposition}
\newcommand{\E}{\mathop{\mathbb E}\nolimits}
\newcommand{\R}{\mathbb{R}}
\newcommand{\N}{\mathbb{N}}
\newcommand{\I}{\mathbb{I}}
\newcommand{\pto}{\ensuremath{\stackrel{p}{\to}}}
\newcommand{\pneq}{\ensuremath{\stackrel{\text{a.\,s.}}{=}}}
\renewcommand{\P}{\mathbb{P}}
\begin{document}
	
	\begin{center}
		\large \bf
		ESTIMATING THE DISTRIBUTION OF DISPLACEMENTS
	\end{center}
	
	\vspace{0.5cm}
	
	\centerline{A.\ S.\ Kurennoy\footnote{%
			 Email: {\tt akurennoy@cs.msu.ru}\\
			This research is supported by the Russian Foundation for Basic Research Grant 17-01-00125 and the Russian Science Foundation Grant 15-11-10021.}}
	
	\vspace{0.7cm}
	
	\centerline{\today}
	
	\vspace{0.5cm}
	
	\begin{center}
		\bf{ABSTRACT}
	\end{center}
	
	\noindent
	In this paper, we propose a method for estimating the distribution of time differences between connected events (such as ad impressions and corresponding customer calls). A special feature of this method is that it does not require matching those connected events with each other. The method is very simple to use as it essentially consists of computing an ordinary least squares estimator.
	
	\vspace{0.1cm}
	
	{\small
		\noindent {\bf Key words:} time to conversion, conversion delay.
		
		\noindent {\bf AMS 2010 subject classifications:} 62--07, 62F10, 62J05.}	
	
	\section{Introduction}
	
	The term \emph{displacement} in this paper refers to the time difference between two connected events.
	Such events can be, for instance, a visit of a customer to an online store and a call made by that customer to the store. The call is made after some delay, i.\,e. it is displaced in time with respect to the visit. In this example, displacement is the same as \emph{time to conversion}.
	
	One may be interested in the distribution of displacements for various reasons.
	For example, the knowledge of the distribution of delays between visits and calls can be used for managing the pool of telephone numbers more optimally.
	
	Estimating the distribution of displacements is straightforward if they are directly observed. In the aforementioned example, it is the case if calls can be accurately matched with the corresponding visits. However, in practice, such matching is not always possible.
	
	In this paper, we propose a method for estimating the distribution of displacements that relies solely on the counts of events before and after displacing and does not require observing displacements. In order to apply this method in the example above one only needs to know the number of visits and calls in consecutive time intervals and does not have to match visits with calls at all.
	
	
	The proposed method 
	is described in Section~\ref{sec:method}. Section~\ref{sec:mc} presents some simulation results. All formal derivations are in the appendix.
	
	\section{The Proposed Method} \label{sec:method}
	
	Let the real line be divided into intervals of length $\ell$. Furthermore, let $n_t$ and $k_t$, $t \in\mathbb{Z}$, be the number of events in the $t$-th interval before and after displacing, respectively. In terms of the example mentioned in the introduction $n_t$ is the number of visits and $k_t$ is the number of calls. We assume that during displacing each event can be dropped with probability $1-q$, because, obviously, not all visits are converted into calls. As shown in the appendix (Proposition~\ref{prop:E}), given that displacements are distributed in the interval $[0,\,m\ell]$ for some nonnegative integer $m$, the expectation of $k_t$ conditional on $n_t$, $n_{t-1}$,\,\ldots,\,$n_{t-m}$ equals
	\begin{equation} \label{E=}
	\E[k_t\mid n_t,\,n_{t-1},\,\ldots,\,n_{t-m}] = \sum_{j=0}^m qp_j n_{t-j},
	\end{equation}
	where $p_j$ is the probability of an event to be displaced to the $j$-th interval ahead. If the times of events in each interval are distributed uniformly
	then 
	\begin{equation} \label{p=}
	p_0 = I_0,\quad p_j = I_j - I_{j-1},\quad j = 1,\,\ldots,\,m,
	\end{equation}
	where $I_j = \frac 1 \ell\int _{j\ell}^{(j+1)\ell} F(\tau)d\tau$, $j=0,\,1,\,\ldots,\,m$,
	with $F$ being the cumulative distribution function of displacements. Then, having estimates $\hat p_j$ of the probabilities $p_j$, $ j=0,\,1,\,\ldots,\,m$, we can estimate the integrals $I_j$, $j=0,\,1,\,\ldots,\,m$, as
	\begin{equation} \label{hI=}
	\hat I_j = \sum_{i=0}^j \hat p_i,\quad j=0,\,1,\,\ldots,\,m,
	\end{equation}
	and approximate $F$ by the following step function
	\begin{equation} \label{hF=}
	\hat F(\tau) = \begin{cases}
	0, & \tau < 0,\\
	\hat I_j, & j\ell \le \tau < (j+1)\ell,\quad j = 0,\,1,\,\ldots,\,m-1,\\
	1, & \tau \ge 1.
	\end{cases}
	\end{equation}
	We mention that the times of events in an interval are distributed uniformly if, for example, the events arrive according to a Poisson point process which intensity function is constant on the interval (see, e.g., \cite[Section~2.3]{Str:PPP:10}).
	
	Suppose that the variables $n_t$ and $k_t$ are observed for $t=1,\,\ldots,\,T$. In order to estimate $p_j$, $j=0,\,1,\,\ldots,\,m$, one can first find the ordinary least squeares estimates $\hat\beta_j$, $j=0,\,1,\,\ldots,\,m$, of the regression coefficients in \eqref{E=} and then compute
	\begin{equation} \label{hpj=}
		\hat p_j = \frac{\max\{\hat \beta_j,\, 0\}}{\sum_{i=0}^m \max\{\hat \beta_i,\, 0\}},\quad j=0,\,1,\,\ldots,\,m.
	\end{equation}
	If
	$$
		\hat \beta_j \pto qp_j\quad \forall\,j=0,\,1,\,\ldots,\,m
	$$ 
	then, by a standard continuity argument,
	$$
		\hat p_j \pto p_j\quad \forall\,j=0,\,1,\,\ldots,\,m.
	$$
	Conditions ensuring the consistency of ordinary least squares estimates can be found, for example, in \cite{Whi:ATE:01}.
	
	An alternative way of estimating the probabilities $p_j$, $j=0,\,1,\,\ldots,\,m$, would consist of computing an estimate $\hat q$ of the conversion rate (e.g. $\hat q = {\sum_{t=1}^T k_t}/{\sum_{t=1}^T n_t}$), then solving the following constrained optimization problem 
	\begin{equation*}
		\begin{array}{ll}
		\min_{b_0,\,b_1,\,\ldots,\,b_m\in\R} & \sum_{t=m+1}^T\left(k_t - \sum_{j=0}^m b_j n_{t-j}\right)^2\\
		\textup{subject to} & \sum_{j=0}^m b_j = \hat q\\
		& b_j \ge 0,\quad j=0,\,1,\,\ldots,\,m,
		\end{array}
	\end{equation*}
	and finally deviding the solution by $\hat q$. However, the estimates obtained in this way had inferior performance compared to the ones defined by \eqref{hpj=} in our simulation experiment.

	\section{Numerical Experiment} \label{sec:mc}

	In this section we illustrate our findings with a simulation experiment.
	
	In each replication we generated events for $N$ days according to an inhomogeneous Poisson process. Time was measured in minutes. The intensity function of the process was constant within each hour of the day, namely,
	$$
	\lambda(\tau) = \lambda_{([\tau/60]\!\!\!\!\mod 24)},\quad \tau \ge 0,
	$$
	with $\lambda_0=50$,  $\lambda_1=\lambda_{23}=63$,  $\lambda_2=\lambda_{22} =75$,  $\lambda_3=\lambda_{21}=88$, $\lambda_4=\lambda_{20}=100$, $\lambda_5=\lambda_{19}=110$, $\lambda_6=\lambda_{18}=120$, $\lambda_7=\lambda_{17}=129$, $\lambda_8=\lambda_{16}=136$, $\lambda_9=\lambda_{15}=142$, $\lambda_{10}=\lambda_{14}=146$, $\lambda_{11}=\lambda_{13}=149$, $\lambda_{12}=150$. The displacements were drawn from the uniform distribution on $[0,\,60]$. The parameter $m$ took values 1, 2, 3, 4, 5, 6, 10, 12, 15, 20, 30, 60 and in each case the parameter $\ell$ was set to $60 / m$. The number of days $N$ equaled 5, 10, 30, and 60. Finally, $q$ (the conversion rate) was equal to $0.01$, $0.05$, and $0.1$.
	
	For each combination of parameters we performed 500 replications. The results are presented in Figure~\ref{fig:1}. The upper line shows the average $L_2$ distance between the true distribution function of displacements and its estimate (for different values of $m$). The probabilities $p_j$ were estimated according to \eqref{hpj=}. The lower line depicts the minimum distance that can be attained by an approximation that is constant on the intervals $[j\ell,\,(j+1)\ell)$, $j=0,\,1,\,\ldots,\,m-1$, (this is the distance that would be observed if the probabilities $p_j$ were estimated perfectly).
	
	Clearly, the approximation accuracy is higher for larger values of $N$ and $q$ (i.\,e. when there is more data or when the conversion rate is higher). 
	
	\begin{figure}
		\caption{$L_2$ distance from the true distribution function for different values of $m$: minimum possible (lower line) and averaged over 500 simulations (upper line).} \label{fig:1}
		\includegraphics[width=\textwidth]{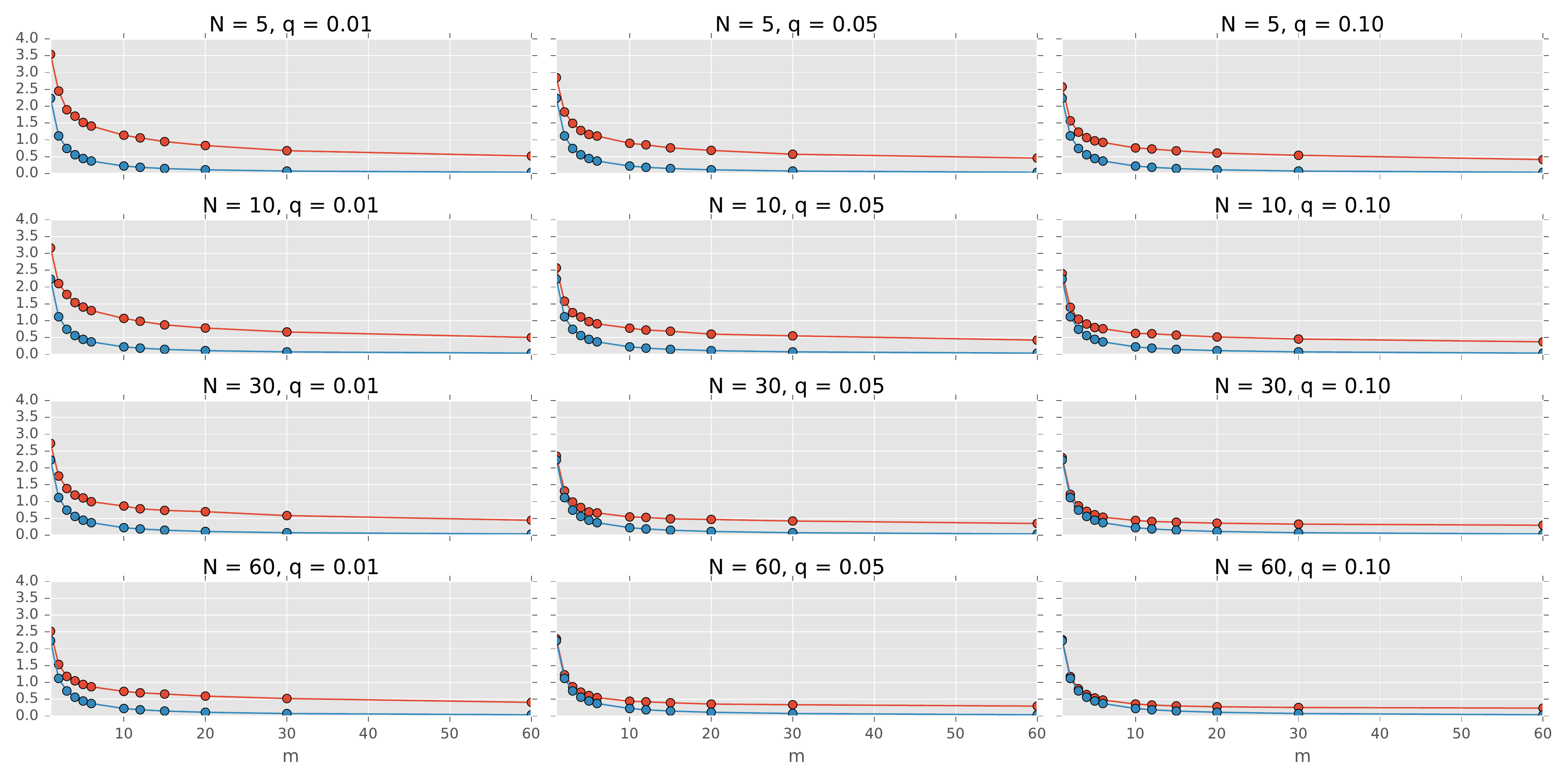}
	\end{figure}

	\renewcommand{\theequation}{A\arabic{equation}}
	\renewcommand{\theproposition}{A\arabic{proposition}}
	\renewcommand{\thetheorem}{A\arabic{theorem}}
	\renewcommand{\thecorollary}{A\arabic{corollary}}
	\setcounter{equation}{0}
		
	\section*{Appendix}
	
	All random variables in the statement below are supposed to be defined on some probability space $(\Omega,\,\mathcal{F},\,\P)$, where $\Omega$ is an arbitrary nonempty set.
	
	Our notation is mostly standard, in particular, the symbol $\I$ is used to denote the indicator function of an event
	and $\N_0$ is the set of nonnegative integers.
	
	\begin{proposition} \label{prop:E}
		Let $n_t\colon\Omega\mapsto\N_0$, $\xi_{t,\,i}\colon\Omega\mapsto[0,\,1]$, $d_{t,\,i}\colon\Omega\mapsto[0,\,+\infty)$, and $a_{t,\,i}\colon\Omega\mapsto\{0,\,1\}$, $t\in\mathbb{Z}$, $i=1,\,2,\,\ldots$, be independent collections of random variables.
		Assume that both $\xi_{t,\,i}$, and $d_{t,\,i}$, $t\in\mathbb{Z}$, $i=1,\,2,\,\ldots$, are identically distributed with distribution functions $G$ and $F$, respectively. Moreover, suppose that $n_t$, $t\in\mathbb{Z}$, are integrable, and that for each $i=0,\,1,\,\ldots$ and $t\in\mathbb{Z}$ it holds that $\P\{a_{t,\,i}\} = q$, $\P\{\xi_{t,\,i}\in (0,\,1)\} = 1$, and $\P\{d_{t,\,i} \in [0,\,m\ell]\} = 1$, where $\ell > 0$ and $m\in\N_0$. Let 
		$$
		k_t = \sum_{j=-\infty}^\infty \sum_{i=1}^{n_{t-j}} a_{t-j,\,i} \I\{(t-1)\ell \le (t-j-1 +\xi_{t-j,\,i})\ell + d_{t-j,\,i}\le t\ell\},\quad  t\in\mathbb{Z}.
		$$

		Then for all $t\in\mathbb{Z}$ it hols that
		\begin{itemize}
			\item[\rm 1)] $k_t$ is almost surely finite,
			\item[\rm 2)] $k_t$ is integrable,
			\item[\rm 3)] the equality \eqref{E=} is true for $p_j$, $j=0,\,1,\,\ldots,\,m$, defined by \eqref{p=} with
			$$
			I_j = \int F((j+1-\xi)\ell) dG(\xi),\quad j=0,\,1,\,\ldots,\,m.
			$$
		\end{itemize}
		If, in addition,
		\begin{equation} \label{G=}
		G(\xi) = \begin{cases}
		0,\quad \xi < 0,\\
		\xi,\quad \xi\in [0,\,1],\\
		1,\quad \xi > 1,
		\end{cases}
		\end{equation}
		then
		$I_j = \frac 1 \ell\int _{j\ell}^{(j+1)\ell} F(\tau)d\tau$ for each $j=0,\,1,\,\ldots,\,m$.
	\end{proposition}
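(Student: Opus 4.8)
The plan is to handle the three claims in order: collapse the doubly-infinite sum defining $k_t$ to a sum over $j=0,1,\dots,m$, which at once yields 1) and 2), and then compute $\E[k_t\mid n_t,\dots,n_{t-m}]$ by conditioning on the $\sigma$-algebra $\mathcal N=\sigma(n_s:s\in\mathbb Z)$ of all the counts and using the mutual independence of the four collections.

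For 1) and 2), I would first note that almost surely every $\xi_{t,i}$ lies in $(0,1)$ and every $d_{t,i}$ in $[0,m\ell]$ (each of probability one, countably many). On that event the displaced position $(t-j-1+\xi_{t-j,i})\ell+d_{t-j,i}$ of the $i$-th event of interval $t-j$ lies in the open interval $\bigl((t-j-1)\ell,\,(t-j+m)\ell\bigr)$, which meets $[(t-1)\ell,\,t\ell]$ only when $0\le j\le m$; hence $k_t=\sum_{j=0}^{m}\sum_{i=1}^{n_{t-j}}a_{t-j,i}\,\I\{\dots\}$ almost surely, a finite sum of finitely many terms because $n_{t-j}\in\N_0$, which is 1). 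Bounding each indicator and each $a_{t-j,i}$ by $1$ in that representation gives $0\le k_t\le\sum_{j=0}^{m}n_{t-j}$ almost surely, so the assumed integrability of the $n_{t-j}$ gives 2) (measurability of $k_t$ is clear, it being a countable sum of nonnegative measurable functions).

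For 3), since $k_t$ is a measurable function of $n_t,\dots,n_{t-m}$ together with the variables $\{a_{t,i}\},\{\xi_{t,i}\},\{d_{t,i}\}$, which are independent of $\mathcal N$, the standard independence lemma gives $\E[k_t\mid n_t,\dots,n_{t-m}]=\E[k_t\mid\mathcal N]$. Writing $k_t=\sum_{j=0}^{m}\sum_{i\ge1}\I\{i\le n_{t-j}\}\,a_{t-j,i}\,\I\{\dots\}$, interchanging the nonnegative sum with $\E[\cdot\mid\mathcal N]$ by conditional monotone convergence, and using that $\I\{i\le n_{t-j}\}$ is $\mathcal N$-measurable while $(a_{t-j,i},\xi_{t-j,i},d_{t-j,i})$ is independent of $\mathcal N$ with expectation not depending on $i$, one obtains
$$\E[k_t\mid\mathcal N]=\sum_{j=0}^{m}n_{t-j}\,\E\bigl[a_{t-j,1}\,\I\{(t-1)\ell\le(t-j-1+\xi_{t-j,1})\ell+d_{t-j,1}\le t\ell\}\bigr].$$
Using $a_{t-j,1}\perp(\xi_{t-j,1},d_{t-j,1})$ and $\xi_{t-j,1}\perp d_{t-j,1}$, each expectation is $q$ times $\P\{(j-\xi)\ell\le d\le(j+1-\xi)\ell\}$ with independent $\xi\sim G$, $d\sim F$ — the event coming from $(t-1)\ell\le(t-j-1+\xi)\ell+d\le t\ell$ after subtracting $(t-j-1)\ell$ and isolating $d$. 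Conditioning on $\xi$ and integrating against $G$, this probability equals $\int\bigl(F((j+1-\xi)\ell)-\P\{d<(j-\xi)\ell\}\bigr)\,dG(\xi)$; since $d\ge0$ and $\xi>0$ a.s., the subtracted term is $0$ when $j=0$, so that coefficient is $qI_0=qp_0$, while for $j\ge1$ it equals $q(I_j-I_{j-1})=qp_j$ with $I_j=\int F((j+1-\xi)\ell)\,dG(\xi)$, i.e.\ \eqref{E=} with $p_j$ as in \eqref{p=}.

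The step I expect to be the main obstacle is this last identification for $j\ge1$, which amounts to $\int\P\{d<(j-\xi)\ell\}\,dG(\xi)=\int F((j-\xi)\ell)\,dG(\xi)=I_{j-1}$, i.e.\ that $\int\P\{d=(j-\xi)\ell\}\,dG(\xi)=0$ — the careful bookkeeping of displaced events that land exactly on an interval endpoint, where the hypothesis $\P\{\xi_{t,i}\in(0,1)\}=1$ enters (by Fubini the quantity to be killed rewrites as $\int G(\{\,j-y/\ell\,\})\,d\mu_d(y)$, $\mu_d$ the law of $d$). Everything else — the reduction to the finite sum and the handling of a conditional expectation of a sum with a random number of summands — is routine. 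Finally, under \eqref{G=} the substitution $u=(j+1-\xi)\ell$ turns $I_j=\int_0^1 F((j+1-\xi)\ell)\,d\xi$ into $\tfrac1\ell\int_{j\ell}^{(j+1)\ell}F(u)\,du$, giving the last assertion.
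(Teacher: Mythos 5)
Your argument follows the paper's proof essentially step for step: the reduction of the doubly infinite sum to $j=0,\dots,m$ using $\xi_{t,i}\in(0,1)$ and $d_{t,i}\in[0,m\ell]$ almost surely (which yields 1) and 2) by domination by $\sum_{j=0}^m n_{t-j}$), the passage to the conditional expectation via independence of the four collections, and the evaluation of the resulting probability by Fubini's theorem, including the observation that $\int F(-\xi\ell)\,dG(\xi)=0$ disposes of the $j=0$ boundary term. Up to the more explicit bookkeeping with $\sigma(n_s\colon s\in\mathbb Z)$ and conditional monotone convergence, this is exactly the paper's route.

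The one substantive point on which you go beyond the paper is the one you yourself single out as the main obstacle, and there your resolution does not work. You correctly note that $\P\{(j-\xi)\ell\le d\le(j+1-\xi)\ell\mid\xi\}$ equals $F((j+1-\xi)\ell)-\P\{d<(j-\xi)\ell\}$ rather than $F((j+1-\xi)\ell)-F((j-\xi)\ell)$, so that identifying the coefficient with $q(I_j-I_{j-1})$ for $j\ge1$ requires $\int\P\{d=(j-\xi)\ell\}\,dG(\xi)=0$; the paper's proof writes the difference of $F$-values directly and silently passes over this. But the hypothesis $\P\{\xi_{t,i}\in(0,1)\}=1$, which you invoke to kill this term, only forbids atoms of $G$ at $0$ and $1$: your own Fubini rewriting $\int G(\{j-y/\ell\})\,d\mu_d(y)$ can be strictly positive whenever an atom of $G$ inside $(0,1)$ aligns with an atom of $F$. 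For instance, with $G=\delta_{1/2}$ and $F=\delta_{(j_0-1/2)\ell}$ for some $j_0\in\{1,\dots,m\}$, every displaced event lands exactly on a boundary point $t\ell$ and is counted in both $k_t$ and $k_{t+1}$, giving $\E[k_t\mid n_t,\dots,n_{t-m}]=q\left(n_{t-j_0+1}+n_{t-j_0}\right)$, whereas \eqref{E=} with \eqref{p=} predicts $qn_{t-j_0+1}$. So the telescoping for $j\ge 1$ genuinely needs an additional continuity assumption (continuity of $F$ or of $G$ suffices; the uniform $G$ of \eqref{G=} in the final assertion is covered). In short: the gap you flagged is real, it is present in the paper's proof as well, and it is not closed by the hypothesis you cite; everything else in your proposal is sound.
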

	\begin{proof}
		Since the random variables $d_{t,\,i}$ and $\xi_{t,\,i}$, $t\in\mathbb{Z}$, $i=1,\,2,\,\ldots$, almost surely lie in the intervals $[0,\,m\ell]$ and $(0,\,1)$, respectively, it holds that 
		$$
		\P\{(t-1)\ell \le (t-j-1 + \xi_{\tau,\,i})\ell + d_{\tau,\,i} \le t\ell\} = 0
		$$
		when $j < 0$ or $j > m$ for all $i=1,\,2,\,\ldots$ and all $t\in\mathbb{Z}$. Therefore
		$$
		k_t \pneq \sum_{j=0}^{m} \sum_{i=1}^{n_{t-j}} a_{t-j,\,i} \I\{(t-1)\ell \le (t-j-1 +\xi_{t-j,\,i})\ell + d_{t-j,\,i}\le t\ell\}\quad\forall\,t\in\mathbb{Z}
		$$
		from which 1) and 2) readily follow. Moreover, under the stated independence and distributional assumptions we have
		\begin{eqnarray*}
			\E[k_t\mid n_t,\,n_{t-1},\,\ldots,\,n_{t-m}] = \sum_{j=0}^m q \P\{(t-1)\ell \le (t-j-1 +\xi_{1,\,1})\ell + d_{1,\,1}\le t\ell\} n_{t-j}
		\end{eqnarray*}
		for all $t\in\mathbb{Z}$. In order to obtain 3) it remains to compute $\P\{(t-1)\ell \le (t-j-1 +\xi_{1,\,1})\ell + d_{1,\,1}\le t\ell\}$ for $j=0,\,1,\,\ldots,\,m$ by applying Fubini's theorem,
		\begin{eqnarray*}
			\P\{(t-1)\ell \le (t-j-1 +\xi_{1,\,1})\ell + d_{1,\,1}\le t\ell\} &=&
			\int \left(F((j+1-\xi)\ell) - F((j - \xi)\ell)\right) dG(\xi),
		\end{eqnarray*}
		and note that $\int F(-\xi\ell)dG(\xi) = 0$ because $F(\tau) = 0$ for $\tau < 0$ and $\xi\in(0,\,1)$ with probability one. This completes the proof as the last assertion of the proposition is trivial.
	\end{proof}


\begin{thebibliography}{10}
	\providecommand{\natexlab}[1]{#1}
	\providecommand{\url}[1]{\texttt{#1}}
	\expandafter\ifx\csname urlstyle\endcsname\relax
	\providecommand{\doi}[1]{doi: #1}\else
	\providecommand{\doi}{doi: \begingroup \urlstyle{rm}\Url}\fi
	
%
%
%
%
%
%
	
	\bibitem[Streit(2010)]{Str:PPP:10}
	R.~L. Streit.
	\newblock \emph{Poisson Point Processes: Imaging, Tracking, and Sensing}.
	\newblock Springer US, 2010.
	
	
	\bibitem[White(2001)]{Whi:ATE:01}
	H.~White.
	\newblock \emph{Asymptotic theory for econometricians}.
	\newblock Academic press New York, 2001.
	
\end{thebibliography}
\end{document}